\theoremstyle{plain}
\newtheorem{theorem}{Theorem}
\newtheorem*{theorem*}{Theorem}
\newtheorem{lemma}[theorem]{Lemma}
\newtheorem{corollary}[theorem]{Corollary}
\newcommand{\C}{\mathds{C}}
\newcommand{\R}{\mathds{R}}
\newcommand{\Z}{\mathds{Z}}
\newcommand{\ii}{\mathrm{i}}
\newcommand{\e}{\mathrm{e}}
\newcommand{\ee}{\mathbf{e}}
\begin{document}

\renewcommand{\labelenumi}{\roman{enumi}.}

\title{Pretty Good State Transfer in Qubit Chains - The Heisenberg Hamiltonian}

\author{Leonardo Banchi}
\affiliation{Department of Physics and Astronomy, University College London, \\ London, WC1E 6BT, United Kingdom}

\author{Gabriel Coutinho} 
\email{gabriel@dcc.ufmg.br}
\affiliation{Departamento de Ciência da Computação, ICEx-UFMG, \\ Belo Horizonte, MG, 31270-901, Brazil}

\author{Chris Godsil}
\affiliation{Department of Combinatorics \& Optimization, University of Waterloo, \\ Waterloo, ON, N2L3G1, Canada} 

\author{Simone Severini} 
\affiliation{Department of Computer Science, University College London, \\ London, WC1E 6BT, United Kingdom}

\date{  \today  }

\begin{abstract}
Pretty good state transfer in networks of qubits occurs when a continuous-time quantum walk allows the transmission of a qubit state from one node of the network to another, with fidelity arbitrarily close to 1. We prove that in a Heisenberg chain with $n$ qubits there is pretty good state transfer between the nodes at the $j$-th and $(n-j+1)$-th position if $n$ is a power of 2. Moreover, this condition is also necessary for $j=1$. We obtain this result by applying a theorem due to Kronecker about Diophantine approximations, together with techniques from algebraic graph theory. 
\end{abstract}

\maketitle

\section{Introduction}

Long-distance quantum communication, for example over several kilometers, 
typically uses photonic systems. 
On the other hand, given the difficulty of engineering interactions 
between photons, several 
promising candidates for  quantum hardware are based on 
{\it quantum networks} of localized qubits \cite{ladd2010quantum}, which are
easier to manipulate. 
In typical quantum algorithms 
during the computation the states of different qubits  have to be transferred between various registers, namely between different nodes of the network.  However, 
when qubits are localized, their physical movement may be either impossible,
by construction, or energy inefficient. 
A viable solution to this problem is to exploit the coherent dynamics of the 
quantum network, namely a continuous-time quantum walk,
to transfer the quantum states between different nodes 
\cite{BoseQuantumComPaths}. This approach for short-distance ({\it in-chip})
communication has attracted much attention 
\cite{nikolopoulos2014quantum} because it minimizes the use of external control
and also avoids the complex interface between localized and moving 
particles. However, in a generic quantum network the resulting coherent dynamics 
is very complicated and the transmission between two nodes is normally inefficient. 
Therefore, much effort has been devoted to understand what are the best strategies,
or the best networks, to achieve either {\it perfect state transfer }
between distant nodes 
\cite{KayReviewPST,GodsilPerfectStateTransfer12,BoseCasaccinoManciniSeverini,burgarth2005conclusive}, or 
{\it pretty good state transfer} \cite{VinetZhedanovAlmost,GodsilKirklandSeveriniSmithPGST,banchi2013ballistic,apollaro201299,lorenzo2013quantum,omar2014pretty,BurgarthPhDThesis,ref1,ref2} where the 
transmission quality is {\it almost} perfect.
One-dimensional systems, namely chains of qubits, are perhaps the most natural 
candidate for transmission as they resemble a quantum wire or {\it data-bus}. 

Most of the literature on perfect or pretty good state transfer has considered 
chains interacting with the XY Hamiltonian (also called XX Hamiltonian in 
the condensed matter community\cite{mikesca}). 
This kind of chains offers several mathematical 
simplifications, e.g. the Hamiltonian in the single-particle subspace is equivalent 
to the adjacency matrix of the corresponding graph and also 
the many-particle problem is exactly solvable 
\cite{albanese2004mirror}. 
However, in solid state systems such as  quantum dots 
\cite{hanson2007spins} or dopants in silicon \cite{kane1998silicon} and in 
current optical lattice experiments \cite{fukuhara2013quantum}, 
which are some of the most promising quantum devices, the 
natural interaction is the Heisenberg (XYZ) Hamiltonian, sometimes also called 
XXX Hamiltonian\cite{mikesca}. 
For engineered chains in the single-particle subspace the distinction between 
XY and XYZ Hamiltonians can be simulated also with local external fields. 
However, in unmodulated (i.e. non engineered) systems, which are easier to 
create experimentally, such a difference is fundamental.  
Motivated by this, in this paper we focus on unmodulated qubit networks described 
by the XYZ Hamiltonian and we find a full characterization of the chains
admitting pretty good state transfer, namely we prove the following result:

\begin{theorem} \label{thm:lapla}
Pretty good state transfer occurs between the extremal vertices of Heisenberg chain of $n$ qubits if and only if $n$ is a power of $2$. Moreover, in these cases, pretty good state transfer occurs between vertices at the $j$th and $(n+1-j)$th position for all $j = 1,...,n$. 
\end{theorem}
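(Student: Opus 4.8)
The plan is to transfer the question from the Heisenberg chain to the path $P_n$, reduce it via Kronecker's theorem to a statement about integer relations among Laplacian eigenvalues, and then handle the two implications with different tools: a $\mathfrak p$-adic computation at the prime above $2$ for the ``if'' part, and a case analysis producing explicit bad relations for the ``only if'' part.

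\emph{Reduction.} First I would record that on the single-excitation subspace the XYZ Hamiltonian of the $n$-qubit chain acts as $(n-1)\I-2L$ and on the vacuum as $(n-1)\I$, where $L$ is the Laplacian of $P_n$; hence, up to a global phase, transferring a qubit from site $j$ to site $n+1-j$ amounts to making the entry $(\exp(\ii tL))_{n+1-j,\,j}$ tend to $1$. The reflection automorphism $i\mapsto n+1-i$ commutes with $L$ and $P_n$ has simple spectrum, so $e_j$ and $e_{n+1-j}$ are strongly cospectral with sign $\sigma_k=(-1)^k$ on the eigenvalue $\lambda_k=2-2\cos(k\pi/n)$; when $n$ is a power of $2$ every vertex lies in the support of every eigenvalue. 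Since the $\lambda_k$ are algebraic no integer relation involves $2\pi$, so Kronecker's theorem on simultaneous Diophantine approximation turns pretty good state transfer between $e_j$ and $e_{n+1-j}$ into the condition: every integer relation $\sum_k \ell_k\lambda_k=0$ among the nonzero eigenvalues in the support of $e_j$ has $\sum_{k\text{ odd}}\ell_k$ even. Equivalently, $\lambda_k\mapsto[k\text{ odd}]$ extends to a homomorphism $\psi\colon\Lambda^\ast\to\Z/2\Z$, where $\Lambda^\ast=\sum_k\Z\lambda_k$.

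\emph{The ``if'' part.} Let $n=2^m$, $\zeta=\zeta_{2^{m+1}}$, and let $\mathfrak p=(1-\zeta)$ be the unique prime of $\Z[\zeta]$ above $2$, totally ramified with $(2)=\mathfrak p^{2^m}$; write $v$ for its valuation. From $\lambda_k=-\zeta^{-k}(1-\zeta^k)^2$ and the fact that $\zeta^k$ has order $2^{m+1-v_2(k)}$ one gets $v(\lambda_k)=2^{\,v_2(k)+1}$, so $v(\lambda_k)=2$ for $k$ odd and $v(\lambda_k)\ge4$ for $k$ even; moreover $(1-\zeta^k)/(1-\zeta)\equiv k\equiv1$ and $-\zeta^{-k}\equiv1$ mod $\mathfrak p$, so $\lambda_k\equiv(1-\zeta)^2\pmod{\mathfrak p^3}$ whenever $k$ is odd. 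Reducing $\sum_k\ell_k\lambda_k=0$ mod $\mathfrak p^3$ kills the even-$k$ terms and leaves $\bigl(\sum_{k\text{ odd}}\ell_k\bigr)(1-\zeta)^2\equiv0$, forcing $v\bigl(\sum_{k\text{ odd}}\ell_k\bigr)\ge1$, i.e. that integer is even. Thus $\psi$ exists, giving pretty good state transfer; full support makes this work for every $j$.

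\emph{The ``only if'' part, and the main obstacle.} For $n$ not a power of $2$ I must produce a relation with $\sum_{k\text{ odd}}\ell_k$ odd, i.e. an obstruction to $\psi$. The ingredients are: each Galois orbit of eigenvalues has indices of a single parity; the orbit of conductor $e$ has $\sum\lambda_k=\phi(e)-\mu(e)\in\Lambda^\ast$; $\lambda_k+\lambda_{n-k}=4$; and $\lambda_{n/2}=2$ when $n$ is even. Write $n=2^cm$ with $m>1$ odd and fix a prime $p\mid m$. If $c=0$: $\phi(p)-\mu(p)=p$ and $\phi(2p)-\mu(2p)=p-2$ are coprime, so $1\in\Lambda^\ast$ and $\psi(4)=0$, while $\psi(4)=\psi(\lambda_k)+\psi(\lambda_{n-k})=1$ since $k,n-k$ have opposite parity --- contradiction. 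If $c=1$: $2=\phi(4)-\mu(4)$ and $p$ lie in $\Lambda^\ast$ and are coprime, so $1\in\Lambda^\ast$ and $\psi(2)=0$, while $\psi(2)=\psi(\lambda_{n/2})=1$ as $n/2$ is odd --- contradiction. The case $c\ge2$ is the crux: now $\psi(2)=0$ and all the easy relations have even coefficient sums, so instead I would use that the relative trace of $2\cos(\pi/2^cp)$ from $\Q(\zeta_{2^{c+1}p})^+$ to $\Q(\zeta_{2^{c+1}})^+$ equals $-2\cos(x\pi/2^c)$ for some odd $x$; this produces an identity $\sum_{i=1}^{p-1}\lambda_{k_i}=2(p-1)+2\cos(x\pi/2^c)$ with every $k_i$ odd and with $2\cos(x\pi/2^c)=2-\lambda_{k'}$ for some odd $k'$, whence $\psi$ of the left side is $p-1\equiv0$ but of the right side is $0+(\psi(2)-\psi(\lambda_{k'}))=1$. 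Carrying out this relative-trace computation cleanly, and checking that the orbits of conductor $2^{c+1}$ and $2^{c+1}p$ genuinely occur inside $P_n$, is where the real work of the converse lies.
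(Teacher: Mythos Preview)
Your reduction to the Kronecker criterion for $L(P_n)$ is the same as the paper's, but both directions of the main argument take genuinely different routes.

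For the ``if'' direction the paper writes the relation $\sum_r\ell_r\lambda_r=0$ as the statement that $\Phi_{2n}(x)=1+x^n$ divides the palindromic polynomial $L(x)$ in \eqref{eqL(x)}, and a one–line long division forces $\ell_s=\ell_{n-s}$, hence $\sum_{s\ \mathrm{odd}}\ell_s$ is even. Your $\mathfrak p$-adic argument at the totally ramified prime above $2$ in $\Z[\zeta_{2^{m+1}}]$ is correct and arguably more conceptual: it explains \emph{why} powers of $2$ are special (the odd-indexed eigenvalues are exactly those of minimal valuation, and they are all congruent modulo $\mathfrak p^3$), at the cost of assuming more background. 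For the ``only if'' direction the paper is considerably more uniform than your case split. After disposing of odd primes by an explicit construction, it handles every $n=mk$ with $k>1$ odd and $m\ge 2$ in one stroke via the elementary identity $1+2\sum_{r=1}^{(k-1)/2}(-1)^r\cos(\pi r/k)=0$: multiplying by $\cos(\pi/n)$ and by $\cos(2\pi/n)$ and subtracting produces an integer relation among the $\lambda_j$ in which the odd-indexed coefficients pair off except for the single coefficient $+1$ on $\lambda_1$. This covers your cases $c=0$ (composite), $c=1$, and $c\ge 2$ simultaneously without any Galois orbit sums or relative traces. Your approach is correct as well—the relative-trace computation you flag as ``the crux'' really does give $\sum_{i=1}^{p-1}\lambda_{k_i}+\lambda_{k'}=p\,\lambda_{n/2}$ with all $k_i,k'$ odd and $n/2$ even, so the odd-coefficient sum is $p$; and the indices land in $\{1,\dots,n-1\}$ after the parity-preserving reduction $j\mapsto 2n-j$—but the paper's trigonometric identity reaches the same conclusion with much less machinery and no case distinction on $v_2(n)$.
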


The remainder of the paper is organized as follows. In Section \ref{sec:notation}
we introduce the required notation. Section \ref{sec:tec} describes the tools 
needed for the proof of our main result. Specifically Theorem \ref{thm:char}
generalizes for every symmetric algebraic  matrix a result by 
Vinet and Zhedanov\cite{VinetZhedanovAlmost} on certain XY spin chains. 
Finally Section \ref{sec:main} proves our main result. 

\section{Notation and definitions}\label{sec:notation}
We consider a graph $G=(V,E)$ with a set of vertices $V(G)=\{1,\dots,n\}$ and a set 
of edges $E(G)$ that describe the physical pairwise couplings between two vertices. 
We denote $A(G)$ the adjacency matrix with elements $[A(G)]_{ij}=1$, 
if $(i,j)\in E(G)$, 
and $[A(G)]_{ij}=0$ otherwise. For a generic graph structure the Heisenberg (XYZ) 
Hamiltonian is defined by 
\begin{equation}
   \mathcal H_{\rm XYZ}(G) = \frac12\sum_{i\neq j} A(G)_{ij} \left(
     X_i X_j + Y_i Y_j + Z_i Z_j\right),
  \label{heinsenberg}
\end{equation}
where $X_i$, $Y_i$, $Z_i$ are the Pauli matrices acting on the $i$-th vertex. 
On the other hand, the XY Hamiltonian is 
\begin{equation}
   \mathcal H_{\rm XY}(G) = \frac12\sum_{i\neq j} A(G)_{ij} \left(
     X_i X_j + Y_i Y_j \right).
  \label{xyham}
\end{equation}
Both $\mathcal H_{\rm XY}(G)$ and 
$\mathcal H_{\rm XYZ}(G)$ act on the Hilbert space $(\mathbb{C}^2)^{\otimes n}$. 
We call $\{\vert 0\rangle, \vert 1\rangle\}$ the basis of the Pauli matrices on
each vertex and we define the single-particle subspace as the Hilbert space generated 
by the vectors $X_i \vert 0\rangle^{\otimes n} = \vert 0\dots 010\dots 0\rangle
\in(\mathbb{C}^2)^{\otimes n} $, for $i=1,\dots,n$, 
where the $\vert 1\rangle$ state is in the $i$-th position. 
Within this single-particle subspace, the
above Hamiltonians can be written as (see Ref.\cite{BoseCasaccinoManciniSeverini})
\begin{align}
   \mathcal H^{(1)}_{\rm XYZ}(G) &= |E(G)| \openone - 2 L(G), \\ 
   \mathcal H^{(1)}_{\rm XY}(G) &= 2 A(G), 
\end{align}
where the subscript $(1)$ refers to the single-particle subspace, 
$L(G) = \Delta(G)-A(G)$ is the Laplacian of the graph and 
$\Delta(G)$ is the diagonal matrix whose diagonal $i$-th entry is the degree $d(i)$ of 
vertex $i$, namely the number of edges incident with $i$. 
For simplicity, we avoid the use of the explicit notation 
$\mathcal H^{(1)}_{\rm XYZ}(G)$,  $\mathcal H^{(1)}_{\rm XY}(G)$,  and 
we simply call $A(G)$ and $L(G)$ as the XY and XYZ Hamiltonians, 
as they are equivalent to equations (\eqref{heinsenberg}) and (\eqref{xyham}) 
in the single-particle subspace up to a trivial rescaling and shift. 

We now introduce the concept of perfect and pretty good state transfer. 
Given $M$ a symmetric matrix whose columns are indexed by the set of vertices $V$, 
we say that \textit{perfect state transfer} occurs between vertices $a$ and $b$ of 
$M$ if there is a $\tau \in \R^+$ such that
\[|\exp(\ii \tau M)_{a,b}| = 1.\]
This framework generalizes the concept of state transfer in the quantum walk 
of XY and XYZ Hamiltonians in the single-excitation subspace where 
$M$ is respectively chosen as $A(G)$ or $L(G)$. 
If it is clear from the context which $M$ we are dealing with, we use the notation $\exp(\ii t M) = U(t)$.

We relax the definition of perfect state transfer to an $\epsilon$-version. We say that $M$ admits \textit{pretty good state transfer} (also known as almost perfect state transfer) between vertices $a$ and $b$ if, for any $\epsilon > 0$, there is a time $\tau > 0$ such that 
\begin{align} | U(\tau) _{a,b} | > 1 - \epsilon,  \label{eq1} \end{align}
If $\ee_a$ and $\ee_b$ are the characteristic vectors of columns $a$ and $b$, equation (\ref{eq1}) is equivalent to the existence of a $\lambda \in \C$ of absolute value equal to $1$ such that
\[ || U(\tau) \ee_a - \lambda \ee_b || < \epsilon.\]
Finally, for shortness, when $\epsilon$ is not relevant, we abbreviate this equation to
\[  U(\tau)\ee_a \approx \lambda \ee_b.\]

Given a real symmetric matrix $M$ with spectral decomposition
\[M = \sum_{r = 0}^d \theta_r E_r,\]
we say that $a$ and $b$ are \textit{strongly cospectral} if $E_r \ee_a = \pm E_r \ee_b$ for all $r$. This nomenclature is inspired by the following fact. We say that vertices $a$ and $b$ are \textit{cospectral} if the matrix obtained from $M$ upon removing row and column indexed by $a$ has the same spectrum as when we remove row and column indexed by $b$. An equivalent formulation is that $(E_r)_{a,a} = (E_r)_{b,b}$ for all $r$, therefore every pair of strongly cospectral vertices is cospectral, as one would expect. If $M$ is either the adjacency or the Laplacian matrix of a graph, cospectral vertices have necessarily the same number of neighbours. Moreover, in the adjacency case, $a$ and $b$ are cospectral if and only if, for all $k \in \Z$, the number of walks of length $k$ that start and end in $a$ is the same as the number for $b$ (see Ref.\cite[Section 2.5]{CoutinhoPhD} for proofs and references of these facts). There are cases in which cospectral vertices are not strongly cospectral, and in fact we do not know any combinatorial characterization of this property. Finally, it is worth mentioning that if all eigenvalues are simple, both properties are equivalent, and that if $M$ is a tridiagonal matrix (thus encoding the adjacency of a linear chain), then strong cospectrality is equivalent to the property of mirror-symmetry of the weights.

We also define the \textit{eigenvalue support} of $a$ as the set of eigenvalues $\theta_r$ such that $E_r \ee_a \neq 0$.

\section{Technical preliminaries}\label{sec:tec}

Godsil et al.~\cite{GodsilKirklandSeveriniSmithPGST} determined when a linear chain with unmodulated spins admits pretty good state transfer between the end vertices according to the XY-Hamiltonian. Subsequently, Vinet and Zhedanov \cite{VinetZhedanovAlmost} applied a theorem due to Kronecker on chains with non-unitary weights, providing new examples of pretty good state transfer in the XY-Hamiltonian model.

We point out that the application of Kronecker's theory to characterize pretty good state transfer in Heisenberg chains of prime length was previously considered\cite{BurgarthPhDThesis}. Unfortunately the fact that the eigenvalues of these Hamiltonians are not linearly independent over the rationals went unnoticed, leading to the false conclusion that pretty good state transfer would occur in all chains of prime length. As an example, note that the probability of state transfer at time $t$ between the extreme vertices in a chain of $3$ qubits is equal to ${\frac{1}{9}(1-\cos(t))^2(5+4 \cos(t))}$, which is at most $3/4$ for all $t$.

In this work, we apply a more descriptive version of Kronecker's theorem, which deals with the case where the eigenvalues are not necessarily linearly independent over the rationals. As a result, we fully characterize linear chains with unmodulated spins admitting pretty good state transfer according to the Heisenberg Hamiltonian.

\begin{theorem} \label{thm:char}
Let $a$ and $b$ be columns of a symmetric algebraic matrix $M$. Then pretty good state transfer occurs between $a$ and $b$ if and only if both conditions below are satisfied.
\begin{enumerate}[(i)]
\item Columns $a$ and $b$ are strongly cospectral. In this case, let $\theta_0 , ... , \theta_d$ be the eigenvalues in their support, and for $r = 0,...,d$, let $\sigma_r$ be defined as $0$ if the projections onto $E_r$ are equal, and $1$ if they have opposite signs.
\item If there is a set of integers $\ell_0,....,\ell_d$ such that
\[ \sum_{r = 0}^d \ell_r \theta_r = 0 \quad \text{and} \quad \sum_{r = 0}^d \ell_r \sigma_r \text{ is odd,}\]
then
\[\sum_{r = 0}^d \ell_r \neq 0.\]
\end{enumerate}
\end{theorem}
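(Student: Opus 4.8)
The plan is to analyze the evolution $U(\tau) = \exp(\ii\tau M)$ by expanding in the spectral idempotents $E_r$. Writing $U(\tau)\ee_a = \sum_r \e^{\ii\tau\theta_r} E_r\ee_a$, pretty good state transfer between $a$ and $b$ means we can make $U(\tau)\ee_a$ arbitrarily close to $\lambda\ee_b$ for some unimodular $\lambda$. My first step is to show that strong cospectrality is necessary: if $U(\tau)\ee_a \approx \lambda\ee_b$, then applying $E_r$ gives $\e^{\ii\tau\theta_r}E_r\ee_a \approx \lambda E_r\ee_b$, so $\|E_r\ee_a\| \approx \|E_r\ee_b\|$ for every $r$; since $M$ is algebraic the eigenvalues and idempotents are fixed while $\epsilon \to 0$, forcing $\|E_r\ee_a\| = \|E_r\ee_b\|$, and then a short argument (using that $\ee_a,\ee_b$ are $0/1$ vectors, or more generally that $M$ being algebraic makes the relevant quantities algebraic) shows $E_r\ee_a = \pm E_r\ee_b$. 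So from now on we assume (i) holds, set $F_r = E_r\ee_a = (-1)^{\sigma_r} E_r\ee_b$ for $r$ in the support, and reduce to the question: when can we choose $\tau$ so that $\e^{\ii\tau\theta_r} \approx (-1)^{\sigma_r}\lambda$ simultaneously for all $r$ in the support? Indeed, given strong cospectrality, $U(\tau)\ee_a = \sum_r \e^{\ii\tau\theta_r}F_r$ and $\lambda\ee_b = \sum_r (-1)^{\sigma_r}\lambda F_r$, and since the $F_r$ are nonzero and pairwise orthogonal, closeness of these vectors is equivalent to closeness of the coefficients $\e^{\ii\tau\theta_r}$ to $(-1)^{\sigma_r}\lambda$ for each $r$.

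The next step is to translate ``for all $\epsilon>0$ there is $\tau$ with $\e^{\ii\tau\theta_r} \approx (-1)^{\sigma_r}\lambda$ for all $r$'' into a statement about the closure of a subgroup of a torus, which is exactly the setting of Kronecker's approximation theorem. Writing $\lambda = \e^{\ii\pi y}$ (so $(-1)^{\sigma_r}\lambda = \e^{\ii\pi(y+\sigma_r)}$), the condition becomes: the point $\big(\frac{\tau\theta_0}{2\pi},\dots,\frac{\tau\theta_d}{2\pi}, \frac{y}{2}\big)$ can be made arbitrarily close, modulo $1$, to a point of the form $\big(\frac{\sigma_0}{2},\dots,\frac{\sigma_d}{2},0\big) + (\text{anything in the }\tau,y\text{ directions})$ — more cleanly, the set $\{(\tau\theta_0,\dots,\tau\theta_d) \bmod 2\pi : \tau\in\R\}$ together with the freedom in $\lambda$ must come within $\epsilon$ of the coset determined by the signs $(-1)^{\sigma_r}$. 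The precise version of Kronecker's theorem I want is the one characterizing the closure of $\{(\tau\alpha_0,\dots,\tau\alpha_d)\bmod 2\pi\}$: a point $(\beta_0,\dots,\beta_d)$ lies in this closure if and only if, for every integer vector $(\ell_0,\dots,\ell_d)$ with $\sum_r \ell_r\alpha_r = 0$, one has $\sum_r \ell_r\beta_r \equiv 0 \pmod{2\pi}$. Applying this with $\alpha_r = \theta_r$ and the target $\beta_r = \pi\sigma_r + \pi c$ (absorbing the unimodular freedom $\lambda = \e^{\ii\pi c}$, $c\in\R$), the approximability condition becomes: for every integer relation $\sum_r \ell_r\theta_r = 0$ there exists $c\in\R$ with $\sum_r \ell_r(\pi\sigma_r + \pi c) \equiv 0 \pmod{2\pi}$, i.e. $\pi\big(\sum_r \ell_r\sigma_r + c\sum_r\ell_r\big) \equiv 0 \pmod{2\pi}$.

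The final step is to see that the displayed ``bad'' condition in (ii) is precisely the obstruction to finding such a $c$. If $\sum_r\ell_r \neq 0$ for a given relation, then $c$ ranges over all reals and we can always solve the congruence, so that relation imposes no constraint. If instead $\sum_r\ell_r = 0$, then the congruence reduces to $\pi\sum_r\ell_r\sigma_r \equiv 0\pmod{2\pi}$, which holds if and only if $\sum_r\ell_r\sigma_r$ is even. Hence pretty good state transfer fails exactly when some integer relation $\sum_r\ell_r\theta_r=0$ has $\sum_r\ell_r = 0$ and $\sum_r\ell_r\sigma_r$ odd — which is the negation of (ii). One subtlety to handle carefully is that the unimodular constant $\lambda$ is a single parameter shared across all $r$, so I should run Kronecker in the $(d{+}2)$-dimensional torus with coordinates $(\tau\theta_0,\dots,\tau\theta_d, -\lambda\text{-phase})$, or equivalently note that the $\lambda$-freedom lets us translate the target coset along the all-ones-type direction; being precise about which integer relations survive in this enlarged torus (relations involving the extra coordinate) is what produces exactly the clause ``$\sum_r\ell_r\neq 0$''. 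I expect this bookkeeping around the shared phase $\lambda$, together with stating the non-standard ``closure of a one-parameter subgroup'' form of Kronecker's theorem cleanly, to be the main obstacle; the spectral-decomposition reductions in the first two steps are routine.
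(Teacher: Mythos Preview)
Your approach is essentially the paper's: reduce to strong cospectrality plus a simultaneous phase--approximation problem, then invoke Kronecker. The one place that needs tightening is exactly the subtlety you flag. In your middle paragraph you phrase the Kronecker condition as ``for every integer relation $\sum_r\ell_r\theta_r=0$ there exists $c$ with $\pi\big(\sum_r\ell_r\sigma_r+c\sum_r\ell_r\big)\equiv 0\pmod{2\pi}$'', but the quantifiers are backwards: a \emph{single} phase $c$ (equivalently a single $\lambda$) must satisfy the congruence for all relations simultaneously. Your proposed fix --- run Kronecker for the two-parameter subgroup generated by $(\theta_0,\dots,\theta_d)$ and the all-ones vector in the $(d{+}1)$-torus --- does work and is arguably cleaner than what the paper does: the obstruction lattice then becomes $\{\ell:\sum_r\ell_r\theta_r=0\text{ and }\sum_r\ell_r=0\}$, and the target $(\pi\sigma_0,\dots,\pi\sigma_d)$ lies in the closure iff $\sum_r\ell_r\sigma_r$ is even on that lattice, which is precisely the contrapositive of (ii).

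The paper instead stays with one-parameter Kronecker and supplies the missing uniformity by hand: if every relation has even $\sum_r\ell_r\sigma_r$, take $\delta=0$; otherwise fix $\delta$ from one relation with odd $\sigma$-sum and nonzero $\sum_r\ell_r$, and show that any second relation incompatible with this $\delta$ can be combined with the first (clearing the $\sum_r\ell_r$ term) to produce a relation with $\sum_r\ell_r=0$ and odd $\sigma$-sum, contradicting (ii). Either route closes the gap; just be sure your write-up does not leave the quantifiers as currently stated.
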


Condition (i) is known to be necessary for perfect state transfer (see for instance Ref.\cite{KayPerfectcommunquantumnetworks}). In the Lemma below, a slightly modified argument works to show that it is also necessary for pretty good state transfer. 

\begin{lemma}
If pretty good state transfer occurs between $a$ and $b$, then they are strongly cospectral vertices.
\end{lemma}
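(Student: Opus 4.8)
The plan is to establish the two conditions in the definition of strong cospectrality, namely that $E_r\ee_a = \pm E_r\ee_b$ for every spectral idempotent $E_r$. I would argue in two stages: first deduce that $|E_r\ee_a| = |E_r\ee_b|$ (that is, the norms of the projections agree), and then bootstrap this to the sign condition. For the first stage, the natural device is a time-averaging argument: since $U(\tau)\ee_a \approx \lambda\ee_b$ for a sequence of times with $\epsilon\to 0$, one considers $\lim_{T\to\infty}\frac1T\int_0^T |U(t)\ee_a|$-type quantities, or more directly uses that $\|U(\tau)\ee_a\| = \|\ee_a\| = 1$ together with $U(\tau)\ee_a \approx \lambda\ee_b$ to get $\langle U(\tau)\ee_a, \ee_b\rangle \to \lambda$ in absolute value $1$. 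Expanding in the spectral basis, $\langle U(\tau)\ee_a,\ee_b\rangle = \sum_r \e^{\ii\tau\theta_r}\langle E_r\ee_a, \ee_b\rangle$, and $\sum_r |\langle E_r\ee_a,\ee_b\rangle| \le \sum_r \|E_r\ee_a\|\|E_r\ee_b\| \le (\sum_r\|E_r\ee_a\|^2)^{1/2}(\sum_r\|E_r\ee_b\|^2)^{1/2} = 1$ by Cauchy--Schwarz; forcing $|\sum_r \e^{\ii\tau\theta_r}\langle E_r\ee_a,\ee_b\rangle|$ arbitrarily close to $1$ therefore forces equality throughout Cauchy--Schwarz, which gives $\|E_r\ee_a\| = \|E_r\ee_b\|$ for all $r$ and moreover that the complex numbers $\langle E_r\ee_a,\ee_b\rangle$ are all nonnegative real multiples of a common unit scalar.

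From $\|E_r\ee_a\| = \|E_r\ee_b\|$ and $\langle E_r\ee_a,\ee_b\rangle = \langle E_r\ee_a, E_r\ee_b\rangle$ having modulus equal to $\|E_r\ee_a\|\|E_r\ee_b\|$ (the Cauchy--Schwarz equality case applied idempotent-by-idempotent), one concludes that $E_r\ee_b$ is a unit-modulus scalar multiple of $E_r\ee_a$ for each $r$. The remaining task is to show that scalar is $\pm 1$. Here I would invoke that $M$ is a real symmetric matrix, so each $E_r$ is a real matrix and $E_r\ee_a$, $E_r\ee_b$ are real vectors; a real scalar of modulus $1$ is $\pm 1$, and we are done. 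If one does not want to assume $M$ real at this point, the same conclusion follows from combining the per-$r$ proportionality constants: writing $E_r\ee_b = \gamma_r E_r\ee_a$ with $|\gamma_r|=1$, the relation $\sum_r \gamma_r \|E_r\ee_a\|^2 = \langle \ee_a,\ee_b\rangle \cdot(\text{something})$ together with the reality of $\ee_a,\ee_b$ pins the $\gamma_r$ down, but in the stated setting $M$ is symmetric over $\R$ so the short argument suffices.

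The main obstacle is the passage from ``approximately'' to ``exactly'': pretty good state transfer only gives us a sequence $\tau_k$ with $\|U(\tau_k)\ee_a - \lambda_k\ee_b\|\to 0$, and the unit scalars $\lambda_k$ need not converge. I would handle this by compactness of the unit circle, extracting a convergent subsequence $\lambda_{k_j}\to\lambda$, so that along this subsequence $U(\tau_{k_j})\ee_a \to \lambda\ee_b$ genuinely. Then the Cauchy--Schwarz rigidity argument above is applied in the limit. One subtlety worth a sentence in the writeup: the inequality $|\sum_r \e^{\ii\tau\theta_r} c_r| \le \sum_r |c_r|$ is always valid, but its near-saturation forces the phases $\e^{\ii\tau_{k_j}\theta_r}$ to align across all $r$ in the support only in the limiting sense; combined with $\sum_r|c_r|=1$ this is exactly enough, and no further Diophantine input (Kronecker) is needed for this lemma — that machinery enters only for the converse direction of Theorem~\ref{thm:char}, not here.
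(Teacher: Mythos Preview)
Your argument is correct and essentially the same as the paper's: both use the triangle inequality together with Cauchy--Schwarz to bound $|U(t)_{a,b}| \le \sum_r |(E_r)_{a,b}| \le 1$, with equality in the second step exactly when $a$ and $b$ are strongly cospectral. Two minor remarks: the paper argues by contrapositive (if $a,b$ are not strongly cospectral then $\sum_r |(E_r)_{a,b}| \le 1-\epsilon$, hence $|U(t)_{a,b}|\le 1-\epsilon$ for \emph{all} $t$), which makes your compactness/subsequence extraction unnecessary; and your aside that the numbers $\langle E_r\ee_a,\ee_b\rangle$ are ``nonnegative real multiples of a common unit scalar'' is not correct --- they are real but can differ in sign (that is precisely what the $\sigma_r$ record) --- though you never actually use this claim, so the argument stands.
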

\begin{proof}
From the spectral decomposition, we have
\[U(t) = \sum_{r=0}^d \e^{\ii t \theta_r} E_r,\]
thus
\[|U(t)_{a,b}| \leq  \sum_{r = 0}^d |(E_r)_{a,b}|.\]
Now $\sum E_r = I$, and, by Cauchy-Schwartz, 
\[(E_r)_{a,a} \geq |(E_r)_{a,b}|.\]
Thus
\[\sum_{r = 0}^d |(E_r)_{a,b}| \leq 1\]
and equality holds if and only if, for all $r$,
\[(E_r)_{a,a} = |(E_r)_{a,b}|,\]
or equivalently, $a$ and $b$ are strongly cospectral. Therefore if $a$ and $b$ are not strongly cospectral, then threre is $\epsilon$ such that 
\[\sum_{r = 0}^d |(E_r)_{a,b}| < 1 - \epsilon,\]
and so, for all $t$, it follows that 
\[|U(t)_{a,b}| \leq 1 - \epsilon,\]
hence pretty good state transfer does not occur.
\end{proof}

We will make use of the following result due to Kronecker.

\begin{theorem}[Kronecker, see for instance Ref.\cite{AlmostPeriodicFunctionsBook}, Chapter 3] \label{thm:kro}
Let $\theta_0,...,\theta_d$ and $\zeta_0,...,\zeta_d$ be arbitrary real numbers. For an arbitrarily small $\epsilon$, the system of inequalities
\[| \theta_r y - \zeta_r | < \epsilon \pmod {2\pi} , \quad (r = 0,...,d),\]
admits a solution for $y$ if and only if, for integers $\ell_0,...,\ell_d$, if
\[\ell_0 \theta_0 + ... + \ell_d \theta_d = 0,\]
then
\[\ell_0 \zeta_0 + ... + \ell_d \zeta_d \equiv 0 \pmod {2\pi}.\]
\end{theorem}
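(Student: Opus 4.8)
The plan is to treat this as a statement about the closure of a one-parameter subgroup of the compact torus $(\R/2\pi\Z)^{d+1}$. Write $\phi(y) = (\theta_0 y, \ldots, \theta_d y)$ for the image of the parameter $y$ in this torus, and let $\zeta = (\zeta_0, \ldots, \zeta_d)$ be the target point. The system of inequalities is solvable for every $\epsilon > 0$ precisely when $\zeta$ lies in the closure $H := \overline{\{\phi(y) : y \in \R\}}$. Since $y \mapsto \phi(y)$ is a homomorphism from $\R$ into the torus, its image is a subgroup, and the closure of a subgroup of a topological group is again a subgroup; thus $H$ is a closed subgroup, and the whole statement reduces to: $\zeta \in H$ if and only if the stated arithmetic condition holds.

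The necessity direction (solvability forces the arithmetic condition) I would prove by a direct estimate. Fix integers $\ell_0, \ldots, \ell_d$ with $\sum_r \ell_r \theta_r = 0$, and take a solution $y$ with error $\epsilon$, so $\theta_r y - \zeta_r = 2\pi m_r + \delta_r$ with $m_r \in \Z$ and $|\delta_r| < \epsilon$. Forming the $\ell_r$-weighted sum and using $\sum_r \ell_r \theta_r = 0$ gives $-\sum_r \ell_r \zeta_r = 2\pi \sum_r \ell_r m_r + \sum_r \ell_r \delta_r$, so $\sum_r \ell_r \zeta_r$ lies within $\epsilon \sum_r |\ell_r|$ of a multiple of $2\pi$. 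Letting $\epsilon \to 0$ while the left side stays fixed forces $\sum_r \ell_r \zeta_r \equiv 0 \pmod{2\pi}$, exactly as claimed.

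For sufficiency I would invoke Pontryagin duality for the compact abelian group $(\R/2\pi\Z)^{d+1}$, whose characters are indexed by integer vectors $\ell$ acting as $x \mapsto \e^{\ii \ell \cdot x}$. Such a character is trivial on $\phi(\R)$, equivalently on its closure $H$, exactly when $\e^{\ii (\sum_r \ell_r \theta_r) y} = 1$ for all $y$, that is, when $\sum_r \ell_r \theta_r = 0$. Hence the annihilator of $H$ in the character lattice is precisely $\Lambda = \{\ell \in \Z^{d+1} : \sum_r \ell_r \theta_r = 0\}$, the lattice of integer relations among the $\theta_r$. The double-annihilator theorem for closed subgroups of a compact abelian group then identifies $H$ with the common kernel of the characters in $\Lambda$: a point $x$ lies in $H$ iff $\e^{\ii \ell \cdot x} = 1$ for every $\ell \in \Lambda$. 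Applying this to $x = \zeta$, membership $\zeta \in H$ is equivalent to $\sum_r \ell_r \zeta_r \equiv 0 \pmod{2\pi}$ for every $\ell \in \Lambda$, which is the hypothesis, closing the equivalence.

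The substantive input, and the step I expect to be the main obstacle, is the double-annihilator identity that a closed subgroup of the torus equals the intersection of the kernels of the characters vanishing on it. This rests on characters separating points of a compact abelian group, a consequence of Pontryagin duality (concretely, of Peter--Weyl or classical Fourier analysis on the torus). To remain self-contained one could instead argue by equidistribution: if $\zeta \notin H$, the quotient by the closed subgroup $H$ is nontrivial, and averaging a suitable character of that quotient along $\phi(y)$ produces a relation $\sum_r \ell_r \theta_r = 0$ with $\sum_r \ell_r \zeta_r \not\equiv 0 \pmod{2\pi}$, contradicting the hypothesis.
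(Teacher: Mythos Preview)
The paper does not give its own proof of this statement: Kronecker's theorem is quoted as a classical result and a reference (Levitan--Zhikov, Chapter~3) is given in lieu of an argument. So there is no in-paper proof to compare against.

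Your proof is correct. The reformulation of the simultaneous approximation problem as membership of the point $\zeta$ in the closure $H$ of the one-parameter subgroup $\phi(\R)$ of the torus is standard and accurate. The necessity direction is a clean direct estimate and needs nothing beyond the triangle inequality. For sufficiency, the identification of the annihilator of $H$ with the relation lattice $\Lambda=\{\ell\in\Z^{d+1}:\sum_r\ell_r\theta_r=0\}$ is correct, and the invocation of the double-annihilator theorem (that a closed subgroup of a compact abelian group is the intersection of the kernels of the characters vanishing on it) is exactly the right tool; this is a consequence of Pontryagin duality and is the heart of the modern proof. Your alternative sketch via the quotient group and a separating character is essentially the same argument unwound one step.

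The classical references typically prove this either by a Weyl-type equidistribution argument with exponential sums or by an inductive reduction on the number of frequencies; your duality approach is the cleaner modern packaging of the same idea and is entirely adequate here. The only thing I would flag is that you are leaning on the separation of points by characters (equivalently, $H^{\perp\perp}=H$) as a black box; if a self-contained proof were required you would need to supply that, but as a citation it is no heavier than the citation the paper itself makes.
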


Now we prove our characterization.

\begin{proof}[Proof of Theorem \ref{thm:char}]

Observe that
\[  U(\tau)\ee_a \approx \lambda \ee_b \]
is equivalent to, for all $r$,
\[   \e^{\ii \theta_r \tau } E_r \ee_a \approx \lambda  E_r \ee_b,\]
which in turn, when $\lambda = \e^{\ii \delta}$, is equivalent to, for all $r$ such that $E_r \ee_u \neq 0$,
\begin{align} \theta_r \tau \approx \delta + q_r \pi,  \label{eq2} \end{align}
where $q_r \in \Z$ is even if and only if $E_r \ee_u = E_r \ee_v$, and odd if and only if $E_r \ee_u = - E_r \ee_v$.


A solution to equation (\ref{eq2}) is equivalent to a solution as described in Theorem \ref{thm:kro} with
\[ y = \tau \quad \text{and} \quad \zeta_r = \delta + \sigma_r \pi, \]
where $\sigma_r = 0$ if $E_r \ee_a = E_r \ee_b$, and $\sigma_r = 1$ if $E_r \ee_a = -E_r \ee_b$.

Now, a particular set of integers $\ell_0,...,\ell_d$ satisfies
\[\ell_0 \zeta_0 + ... + \ell_d \zeta_d \equiv 0 \pmod {2\pi}\]
if and only if there is a $\delta$ such that
\[\ell_0 (\delta + \sigma_0 \pi) + ... + \ell_d (\delta + \sigma_d \pi)  \equiv 0 \pmod {2\pi}\]
which in turn is equivalent to
\begin{align}\delta \left( \sum_{r = 0}^d \ell_r \right) + \pi \left( \sum_{r = 0}^d \sigma_r \ell_r \right) \equiv 0 \pmod {2\pi}.\label{eq3} \end{align}
A solution $\delta$ to this equation exists if and only if, if $\sum \sigma_r \ell_r$ is odd, then $\sum \ell_r$ is non-zero, precisely as stated in condition (ii). This shows immediately that condition (ii) is necessary.

Note however that in order to apply Kronecker's Theorem and show sufficiency, we still need to show that the $\delta$ chosen to solve ($\ref{eq3}$) should also work for any other set of integers $\ell_0',...,\ell_d'$ satisfying $\sum \ell_r' \theta_r = 0$. The result follows easily if, for all such sets, $\sum \sigma_r \ell_r'$ is even, in which case $\delta = 0$ is a solution. So assume $\ell_0,...,\ell_d$ are such that $\sum \sigma_r \ell_r$ is odd, and so $\delta \sum \ell_r$ is also odd, and let $\ell_0',...,\ell_d'$ be such that $\sum \ell_r'\theta_r = 0$. Let $\alpha = \sum \ell_r$ and $\beta = \sum \ell_r'$, and define integers $\gamma_r = (\delta/\pi)(\beta \ell_r - \alpha \ell_r')$. If $\sum \sigma_r \ell_r'$ and $(\delta/\pi) \sum \ell_r'$ have opposing parities, then $\sum \gamma_r \theta_r = 0$, $\sum \sigma_r \gamma_r$ is odd, and $\sum \gamma_r = 0$. This contradicts condition (ii). Therefore $\sum \sigma_r \ell_r'$ and $(\delta/\pi)\sum \ell_r'$ have the same parity, and thus $\delta$ and $\ell_0',...,\ell_d'$ satisfy ($\ref{eq3}$) as well.
\end{proof}

This next corollary is notably useful to study the Laplacian matrix.
\begin{corollary}
Assume $M$ is the Laplacian matrix of a graph with strongly cospectral vertices $a$ and $b$. Say $\theta_0 = 0$, and so $\sigma_0 = 0$. Say the other eigenvalues in their support are $\theta_1 , ... ,\theta_d$, and have $\sigma_1,...,\sigma_d$ defined as before. Then pretty good state transfer occurs between $a$ and $b$ if and only if whenever there are integers $\ell_1,....,\ell_d$ such that
\[ \sum_{r = 1}^d \ell_r \theta_r = 0,\]
then
\[\sum_{r = 1}^d \sigma_r \ell_r \quad \text{is even.}\]
Moreover, in this case, the complex phase with which pretty good state transfer occurs will be equal to $1$.
\end{corollary}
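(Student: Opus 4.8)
The plan is to derive this corollary as a direct specialization of Theorem~\ref{thm:char} to the case $M = L(G)$, exploiting the two special features of the Laplacian: it always has $0$ as an eigenvalue with eigenvector $\mathbf{1}$ (the all-ones vector), and strongly cospectral vertices $a,b$ must satisfy $E_0\ee_a = E_0\ee_b$ since $\mathbf{1}$ has equal entries at $a$ and $b$, forcing $\theta_0 = 0$ into the eigenvalue support of $a$ with $\sigma_0 = 0$. First I would observe that condition~(i) of Theorem~\ref{thm:char} (strong cospectrality) is assumed outright, so only condition~(ii) needs to be unpacked; the claim is that for the Laplacian, condition~(ii) simplifies to the stated condition involving only $\theta_1,\dots,\theta_d$ and $\sigma_1,\dots,\sigma_d$.

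The key step is the following reduction. Suppose we have integers $\ell_0,\ell_1,\dots,\ell_d$ with $\sum_{r=0}^d \ell_r \theta_r = 0$ and $\sum_{r=0}^d \ell_r \sigma_r$ odd; condition~(ii) demands $\sum_{r=0}^d \ell_r \neq 0$. Since $\theta_0 = 0$, the term $\ell_0\theta_0$ contributes nothing to the first sum, so $\sum_{r=1}^d \ell_r\theta_r = 0$; and since $\sigma_0 = 0$, the second sum reads $\sum_{r=1}^d \ell_r\sigma_r$, which is odd. Now the point is that $\ell_0$ is completely free: given any integers $\ell_1,\dots,\ell_d$ with $\sum_{r=1}^d \ell_r\theta_r = 0$ and $\sum_{r=1}^d \ell_r\sigma_r$ odd, we may choose $\ell_0 = -\sum_{r=1}^d \ell_r$, producing a full tuple with $\sum_{r=0}^d \ell_r = 0$, which violates condition~(ii). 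Conversely, if $\sum_{r=1}^d \ell_r\sigma_r$ is even for every relation $\sum_{r=1}^d \ell_r\theta_r = 0$ among the nonzero eigenvalues, then for any full tuple with $\sum_{r=0}^d \ell_r\theta_r = 0$ we have $\sum_{r=0}^d \ell_r\sigma_r = \sum_{r=1}^d \ell_r\sigma_r$ even, so the hypothesis of condition~(ii) is never triggered and (ii) holds vacuously. Hence (ii) is equivalent to: every integer relation $\sum_{r=1}^d \ell_r\theta_r = 0$ has $\sum_{r=1}^d \ell_r\sigma_r$ even, which is exactly the stated criterion.

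For the final sentence about the complex phase, I would revisit equation~(\ref{eq3}) in the proof of Theorem~\ref{thm:char}: the phase $\lambda = \e^{\ii\delta}$ is governed by requiring $\delta\bigl(\sum \ell_r\bigr) + \pi\bigl(\sum \sigma_r\ell_r\bigr) \equiv 0 \pmod{2\pi}$ for all relevant integer tuples. But we have just shown that under the corollary's hypothesis, $\sum \sigma_r\ell_r$ is always even whenever $\sum \ell_r\theta_r = 0$, so the constraint on $\delta$ becomes $\delta\bigl(\sum \ell_r\bigr) \equiv 0 \pmod{2\pi}$ for all such tuples; since the all-zero relation's free coefficient $\ell_0$ can make $\sum \ell_r$ any integer (e.g. $\ell_0 = 1$, rest zero), $\delta \equiv 0$, i.e. $\lambda = 1$, is forced and suffices. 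Thus pretty good state transfer, when it occurs here, does so with phase $1$.

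I do not expect a serious obstacle: the argument is essentially bookkeeping on which index set the sums range over, using $\theta_0 = 0$ and $\sigma_0 = 0$ together with the freedom of $\ell_0$. The only point requiring mild care is justifying that $0$ genuinely lies in the eigenvalue support of $a$ (so that it legitimately appears as one of the $\theta_r$ in Theorem~\ref{thm:char}) --- this follows because $E_0 = \frac{1}{n}\mathbf{1}\mathbf{1}^\top$ has $(E_0)_{a,a} = 1/n \neq 0$ --- and confirming that the ``moreover'' about the phase is read off correctly from~(\ref{eq3}) rather than re-derived from scratch.
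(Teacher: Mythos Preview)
Your proposal is correct and follows essentially the same approach as the paper: both exploit that $\theta_0=0$ makes $\ell_0$ a free parameter, use this freedom to force $\sum_{r=0}^d\ell_r=0$ and thereby reduce condition~(ii) of Theorem~\ref{thm:char} to the stated criterion on $\ell_1,\dots,\ell_d$, and then read off $\lambda=1$ from equation~(\ref{eq3}) by choosing a tuple with $\sum\ell_r$ odd. Your extra justification that $0$ lies in the eigenvalue support (via $E_0=\frac{1}{n}\mathbf{1}\mathbf{1}^\top$) and your explicit choice $\ell_0=1$, rest zero, to pin down $\delta\equiv 0\pmod{2\pi}$ make the phase argument slightly sharper than the paper's phrasing, but the underlying reasoning is the same.
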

\begin{proof}
Make $\theta_0 = 0$. Then given $\ell_0,...,\ell_d$,
\[\sum_{r = 0}^d \ell_r \theta_r = 0 \quad \iff \quad \sum_{r = 1}^d \ell_r \theta_r = 0.\]
Hence the choice of $\ell_0$ is arbitrary, and thus can always be made such that
\[\sum_{r = 0}^{d} \ell_r = 0.\]
Thus, in order for pretty good state transfer to occur, $\sum_{r = 1}^d \sigma_r \ell_r$ can never be odd, and if it is even in all cases, condition (ii) of Theorem \ref{thm:char} is vacuously satisfied. Moreover, in this case, as the choice $\ell_0$ is arbitrary and hence can also be made in a way that $\sum_{r = 0}^{d} \ell_r$ is odd, $\delta$ must be an even multiple of $\pi$, therefore $\lambda = \e^{\ii \delta} = 1$.
\end{proof}

\subsection{The spectrum of Heisenberg chains} \label{sec:graphspectra}

We refer the reader to Brouwer and Haemers \cite{BrouwerHaemers} for the result below. Let $P_n$ denote the path on $n$ vertices. Recall that $L(X)$ denotes the Laplacian matrix of the graph $X$.

\begin{itemize}
\item The eigenvalues of $L(P_n)$ are $0$ with the all 1s eigenvector, and $2 + 2 \cos(\pi r / n)$, $r = 1,...,n-1$. If $\beta_k = \sin( k \pi r /n )$, its corresponding eigenvector is \[( \beta_1 ,\ (-1)^{1} (\beta_1 + \beta_2) ,\  (\beta_2 + \beta_3) ,... ,\  (-1)^{n} (\beta_{n-2} + \beta_{n-1}) , \  (-1)^{n+1}\beta_{n-1}  ) .\]

\end{itemize}

\section{Main result}\label{sec:main}

We are ready to prove Theorem \ref{thm:lapla}.

\smallskip
\noindent {\bf Theorem 1 (restated)} {\it
Pretty good state transfer occurs on $L(P_n)$ between the extremal vertices if and only if $n$ is a power of $2$. Moreover, in these cases, pretty good state transfer occurs between vertices at the $j$th and $(n+1-j)$th position for all $j = 1,...,n$. 
}
\smallskip

\begin{proof}
Suppose the spectral decomposition of $L(P_n)$ is given by
\[L(P_n) = \sum_{r = 0}^{n-1} \lambda_r E_r.\]
Let $R$ be the anti-diagonal matrix of order $n$. It is a straightforward consequence of the spectrum of $P_n$ described in Section \ref{sec:graphspectra} that
\[\sum_{r = 0}^{n-1} (-1)^r E_r = R.\]
This readily implies that vertices at positions $j$ and $(n+1-j)$ are strongly cospectral for $j = 1,...,n$, and hence condition (i) of Theorem \ref{thm:char} is always satisfied, with $\sigma_r = [1+(-1)^{r+1}]/2$.

Let $\zeta_{2n} = \e^{\pi / n}$. Clearly the eigenvalues of $P_n$ can be expressed as
\[\lambda_r = 2 - (\zeta_{2n}^{\phantom{n}r} + \overline{\zeta_{2n}^{\phantom{n}r}} ) = 2 - (\zeta_{2n}^{\phantom{n}r} + \zeta_{2n}^{\phantom{n}2n-r} ) .\]
As a consequence, the eigenvalues belong to the cyclotomic field of $\zeta_{2n}$. Now assume there are integers $\ell_1,...,\ell_{n-1}$ such that
\begin{align}\sum_{r = 1}^{n-1} \ell_r \left( - 2 + (\zeta_{2n}^{\phantom{n}r} + \zeta_{2n}^{\phantom{n}2n-r} ) \right)  = 0. \label{eq4} \end{align}
If $\ell_0 = - \sum_{r=1}^{n-1} \ell_r$, then the cyclotomic polynomial $\Phi_{2n}(x)$ divides
\begin{align} L(x) = 2\ell_0 + \sum_{r = 1}^{n-1} \ell_r x^r   + \sum_{r = n+1}^{2n-1} \ell_{2n-r} x^{r}.\label{eqL(x)} \end{align}
\begin{enumerate}[(i)]
	\item If $n$ is a power of $2$, then $\Phi_{2n}(x) = 1 + x^n$. Performing long division starting from the terms of smaller degree, the general form of an exact quotient of the division of $L(x)$ by $\Phi_{2n}(x)$ is
	\[2 \ell_0 + \sum_{r = 1}^{n-1} \ell_r x^r,\]
	thus the division is exact (and equation (\ref{eq4}) is satisfied) if and only if $\ell_0 = 0$ and $\ell_{s} = \ell_{n-s}$ for all $s = 1,...,n$. As a consequence, whenever (\ref{eq4}) holds, $\sum \ell_{\texttt{odd}}$ is always even, and pretty good state transfer occurs.
	\item If $n$ is an odd prime, then $\Phi_{2n}(x) = 1 - x + x^2 - ... + x^{n-1}$. Performing long division starting from the terms of smaller degree, the general form of an exact quotient of the division of $L(x)$ by $\Phi_{2n}(x)$ is
	\[  2\ell_0 + (2\ell_0 + \ell_1)x + \sum_{r = 2}^{n-1} (\ell_r + \ell_{r-1} ) x^r + \ell_{1}x^n.\]
	This implies that a set of integers $\ell_1,...,\ell_{n-1}$ satisfy equation (\ref{eq4}) if and only if, for all odd $s$ between $1$ and $n-1$,
	\begin{align}\ell_{s} - \ell_{n-s} =  -2\ell_0 = 2 \sum_{r = 1}^{n-1} \ell_r. \label{eq5} \end{align}
	If $n \equiv 3 \pmod 4$, then, for any odd $s$, define $\ell_{s} = n$ and $\ell_{n-s} = -(n-2)$. This provides a solution to ($\ref{eq5}$) such that
	\[\sum_{s \text{ odd}} \ell_{s}  \quad \text{is odd,}\]
	hence pretty good state transfer does not occur in this case. 
	
	If $n \equiv 1 \pmod 4$, make $\ell_1 = -6$, $\ell_{n-1} = 2(n-2)$, and, for any other odd $s$, $\ell_s = -(n+4)$ and $\ell_{n-s} = (n-2)$. Hence
	\[\sum_{r = 1}^{n-1} \ell_r = -(n+1),\]
	and so, for all $s$ odd,
	\[\ell_s - \ell_{n-s} = 2 \sum_{r = 1}^{n-1} \ell_r,\]
	but
	\[\sum_{s \text{ odd}} \ell_s = -(n+4) \left(\frac{n-3}{2}\right) - 6 \quad\text{is odd},\]
	hence pretty good state transfer does not occur in this case.
	
	\item Now assume $n = m k$, where $k > 1$ is odd and $m \geq 2$. We have the identity
	\[1 + 2 \sum_{r = 1}^{(k-1)/2} (-1)^r \cos \left( \frac{\pi r}{k} \right) = 0.\]
	For $q \in \{1,2\}$, we multiply both sides by $\cos (q \pi / n)$ to obtain
	\[\cos \left(\frac{q\pi}{n} \right) + \sum_{r = 1}^{(k-1)/2} (-1)^r \left[ \cos \left( \frac{\pi (mr + q)}{n} \right) + \cos \left( \frac{\pi (mr - q)}{n} \right) \right] = 0.\]
	Now subtract the equation above for $q=1$ from the equation for $q=2$. This leads to
	\[(\lambda_1 - \lambda_2) + \sum_{r=1}^{(k-1)/2} (-1)^r (\lambda_{mr+1} - \lambda_{mr+2}) + \sum_{r=1}^{(k-1)/2} (-1)^r (\lambda_{mr-1} - \lambda_{mr-2}) = 0.\]
	This is an integer combination of the eigenvalues equal to $0$. The coefficients multiplying the odd-indexed eigenvalues appear in pairs, with the exception of the first, which is equal to $1$. Therefore pretty good state transfer does not occur.
\end{enumerate}
\end{proof}

\section{Final remarks}

Theorem \ref{thm:char} provides conditions which are necessary and sufficient for pretty good state transfer to occur, but fails to offer an estimate of how large $t$ should be in order to obtain an approximation in terms of a given $\epsilon$. In Ref.\cite{HemmerMaximonWergeland}, it is shown that in certain dynamical systems the time is proportional to $(1/\epsilon)^n$, but this makes the assumption that the numbers $\theta_0,...,\theta_d$ in the statement of Kronecker's Theorem are linearly independent over the rationals. Note that this is not the case for the examples studied in our paper, and we do not know a reference for these situations.

We point out that we are not necessarily determining all cases in which a qubit chain might admit pretty good state transfer, as we are focusing only on transfer between the end vertices. In fact, there are known examples of perfect state transfer in linear chains with weighted edges between internal nodes while not occurring between the end vertices (see Ref.\cite{KayReviewPST}). 

In an earlier version of our paper, we left as an open question to characterize pretty good state transfer between any pair of vertices in a linear chain with unit weights according to the XY or the XYZ Hamiltonian. For the XY Hamiltonian, some progress was made in Ref.\cite{CoutinhoGuoVBommel}, and the problem was subsequently solved in Ref.\cite{VBommelPGST}. The question remains open for the Laplacian case.

\begin{acknowledgements}
LB acknowledges funding from the European Research Council under the European Union’s Seventh Framework Programme (FP/2007-2013) / ERC Grant Agreement No. 308253.
GC acknowledges the support of grants FAPESP 15/16339-2 and FAPESP 13/03447-6. 
CG acknowledges the support of NSERC Grant RGPIN-9439. 
SS acknowledges the Royal Society and EPSRC.

We all acknowledge the valuable comments of the anonymous referee. We also acknowledge Gabor Lippner, Ada Chan, and Pedro Baptista, whose comments helped us to substantially improve the quality of our paper and fix some mistakes.
\end{acknowledgements}

%

\end{document}